\pgfplotsset{width=5.75cm, compat=1.17}
\newcommand{\todoi}[1]{\todo[inline]{#1}}
\def\orcid#1{\smash{\href{http://orcid.org/#1}{\protect\raisebox{-1.25pt}{\protect\includegraphics{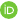}}}}}
\def\specfontfamily#1{\mathcal{#1}}
\def\spec{{\specfontfamily{S}}}
\def\<#1>{\langle#1\rangle}
\newcommand{\linGB}{\textsc{MultiLinG}\xspace}
\newcommand{\abc}{\textsc{ABC}\xspace}
\newcommand{\msolve}{\textsc{msolve}\xspace}
\newcommand{\dyposub}{\textsc{DyPoSub}\xspace}
\newcommand{\amulettwo}{\textsc{AMulet2}\xspace}
\newcommand{\teluma}{\textsc{TeluMA}\xspace}
\newcommand{\dynphaseorderopt}{\textsc{DynPhaseOrderOpt}\xspace}
\newcommand{\dpoo}{\textsc{DPOO}\xspace}
\newcommand{\ZZ}{\mathbb{Z}}
\newcommand{\KK}{\mathbb{K}}
\newcommand{\NN}{\mathbb{N}}
\newcommand{\BB}{\mathbb{B}}
\newcommand{\QQ}{\mathbb{Q}}
\DeclareMathOperator{\lt}{lt}
\DeclareMathOperator{\lm}{lm}
\DeclareMathOperator{\lc}{lc}
\DeclareMathOperator{\dist}{dist}
\DeclareMathOperator{\tail}{tail}
\DeclareMathOperator{\size}{size}
\DeclareMathOperator{\LEX}{lex}
\newcommand{\llex}{\mathrel{\prec_{\LEX}}}
\DeclareMathOperator{\DRL}{drl}
\newcommand{\ldrl}{\mathrel{\prec_{\DRL}}}
\newcommand\Ffour{\textsc{\texorpdfstring{F\textsubscript{4}}{F4}}\xspace}
\DeclareMathOperator\lin{lin}
\newcommand\ideal[1]{\langle#1\rangle}
\DeclareMathOperator\ext{ext}
\DeclareMathOperator\init{init}
\begin{document}
\title{Extracting Linear Relations from Gröbner Bases for Formal
  Verification of And-Inverter Graphs}
\titlerunning{Linear Gröbner Basis Polynomials for AIG Verification}
	%
\author{Daniela Kaufmann\inst{1}\,\orcid{0000-0002-5645-0292} \and J\'er\'emy Berthomieu\inst{2}\,\orcid{0000-0002-9011-2211}  }
\authorrunning{D. Kaufmann and J. Berthomieu}
	%
\institute{TU Wien, Austria \quad\texttt{daniela.kaufmann@tuwien.ac.at} \and Sorbonne Universit\'e, CNRS, LIP6, France \quad\texttt{jeremy.berthomieu@lip6.fr}\\}
\maketitle 
\setcounter{footnote}{0}
\begin{abstract}
  Formal verification techniques based on computer algebra have
  proven highly effective for circuit verification.
  The circuit, given as an and-inverter graph, is encoded using
  polynomials that automatically generate a Gröbner
  basis with respect to a lexicographic term ordering.
  Correctness of the circuit is derived by computing the
  polynomial remainder of the specification.
  However, the main obstacle is the monomial blow-up
  during the reduction, as the degree can increase.

  In this paper, we investigate an orthogonal approach and focus the
  computational effort on rewriting the  Gröbner basis itself. Our goal
  is to ensure the basis contains linear polynomials that can be
  effectively used to rewrite the linearized specification. We first
  prove the soundness and completeness of this technique and then
  demonstrate its practical application. Our implementation of this
  method shows promising results on benchmarks related to multiplier
  verification.

  \keywords{Algebraic Reasoning, Gröbner Basis, Hardware Verification}
\end{abstract}
\section{Introduction}

Formal verification techniques based on algebraic reasoning have
emerged as highly effective tools for verifying hardware, particularly
in the context of verifying arithmetic circuits on the gate-level.
As digital systems become more
complex, ensuring the correctness of such circuits is paramount,
especially in safety-critical applications like cryptography and
signal processing to prevent a repetition of infamous failures, such as
the Pentium FDIV bug~\cite{Sharangpani1994StatisticalAO}.  Established
methods based on satisfiability solving (SAT)~\cite{Biere-SAT-Competition-2016-benchmarks}, or binary decision diagrams (BDDs)~\cite{ChenBryant-DAC95}
often struggle with the complex non-linear structure of arithmetic
circuits.
In contrast, formal verification techniques based on theorem provers~\cite{Temel-TACAS24} or computer algebra, specifically those leveraging Gröbner bases, offer an
effective alternative and have made significant progress in recent
years~\cite{KonradScholl-FMCAD24,KaufmannBiereKauers-FMCAD19,MahzoonGrosseSchollDrechsler-DATE20,KonradSchollMahzoonGrosseDrechsler-FMCAD22}.

In the algebraic method, circuits are given as and-inverter graphs
(AIG)~\cite{Kuehlmann-TCAD2002}. The graph is encoded as a set of
polynomials, which are sorted according to
a lexicographic term ordering, where for each gate in the circuit the
output variable is always greater than the input variables of the
gate. Hence, the leading terms of the polynomial
equations consist of single variables that are mutually disjoint.
This property is called \emph{unique monic leading term} (UMLT)
in~\cite{KaufmannBiereKauers-FMCAD19}.

If such an ordering is chosen, the polynomials automatically form a
Gröbner basis~\cite{Buchberger65}.  Informally said, a Gröbner basis
is a mathematical construct that offers a decision procedure that
guarantees soundness and completeness
of the verification process.  The correctness of the circuit is
determined by computing the unique polynomial remainder of the
specification polynomial, which represents the intended functionality
of the circuit, modulo the Gröbner basis. The circuit fulfills the
specification if, and only if, the final remainder is
zero~\cite{KaufmannBiereKauers-FMSD19}.

However, a major practical obstacle in using a lexicographic term ordering is the
significant computational effort during the reduction process, as the degree can increase.
More precisely the size of the intermediate reduction results generally increases, since the tails of the polynomials in the Gröbner basis have a higher degree than their leading terms.
This is often deferred to as \emph{monomial
blow-up}.
A study in~\cite{MahzoonGrosseDrechsler-ICCAD18} showed
that the intermediate reduction results for
16-bit multipliers can have more than $10^6$ monomials.
To address this challenge, various preprocessing and
rewriting algorithms have been developed, which syntactically or
semantically analyze the input circuit to remove redundant information
from the polynomial encoding, ultimately optimizing the reduction
process and improving the efficiency of the verification.

\paragraph{Related Work. }
Advanced reduction engines designed for the automatic algebraic verification of
 multipliers given as AIGs are implemented in
 tools such as \dynphaseorderopt~\cite{KonradScholl-FMCAD24},
\dyposub~\cite{MahzoonGrosseSchollDrechsler-DATE20}, and
\amulettwo~\cite{KaufmannBiereKauers-FMCAD19,KaufmannBiere-TACAS21},
including its variant
\teluma~\cite{KaufmannBeameBiereNordstrom-DATE22}.
In~\cite{KaufmannBiereKauers-FMCAD19, KaufmannBiere-TACAS21}
SAT solving is used to rewrite certain parts of the
multiplier before applying an incremental column-wise verification
algorithm. In a follow-up
work~\cite{KaufmannBeameBiereNordstrom-DATE22} the usage of the
external SAT solver could be removed by using a sophisticated
algebraic encoding that also takes the polarity of literals into
account.  These techniques have been further enhanced by
parallelization~\cite{LiuLiaoHuangZhenYuan-DATE24} and equivalence
checking-based verification~\cite{LiLiYuFujitaJiangHa-TCAD24}.

In~\cite{MahzoonGrosseSchollDrechsler-DATE20}, the authors present a
dynamic rewriting approach. They decide on
the reduction order on the fly and backtrack if the size of
intermediate reduction results exceeds a predetermined threshold.  In~\cite{KonradScholl-FMCAD24}, the authors revisit and
improve upon~\cite{MahzoonGrosseSchollDrechsler-DATE20} by incorporating mixed signals in their encoding.

While all of the discussed approaches employ various preprocessing and
rewriting techniques, they share a common characteristic: they all
rely on a lexicographic term ordering. \emph{None of the related works have explored alternative term orderings, such as those that prioritize degree-based sorting to limit the degree during the reduction process.}


\paragraph{Our contribution. }

In this paper, we propose an alternative, orthogonal strategy that
shifts the focus of the computational effort from rewriting the
specification to rewriting the Gröbner basis itself.
We impose a different term ordering that takes their degree into account.
The approach is based on the following observation:

\begin{quoting}[
  indentfirst=true,
  leftmargin=\parindent,
  rightmargin=\parindent]\itshape
  If the specification polynomial is linear, a Gröbner basis with
  respect to a degree reverse lexicographic term ordering contains
  linear polynomials that suffice to derive correctness of the
  circuit.
\end{quoting}

Our first contribution is to derive the theoretical foundations of
this observation, including a technical theorem that proves its
soundness and completeness.

However, the computation of a single Gröbner basis for the whole
circuit is practically infeasible due to the large number of variables and more importantly the
\emph{degree of the underlying ideal}.  Our second contribution is a
practical algorithm that splits the computation of the Gröbner basis
into multiple smaller more manageable sub-problems.  We evaluate our
approach on a set of benchmarks for multiplier verification. The
experimental results are promising and indicate that our approach
offers a valuable addition to existing algebraic
verification techniques.

The remainder of the paper is organized as follows. In
Section~\ref{sec:prelim} we introduce the necessary preliminaries. In
Section~\ref{sec:grevlex_verification} we show the theory of our
approach and prove the soundness and
completeness. We present a practical verification algorithm in Section~\ref{sec:practical_grevlex_verification}, and discuss its implementation
and the experimental evaluation in Section~\ref{sec:experiments}
before we conclude the paper in Section~\ref{sec:conclusion}.

\section{Preliminaries} \label{sec:prelim}

In the first part of the preliminaries, Section~\ref{ssec:alg}, we
introduce the theory of Gröbner bases
following~\cite{Buchberger65,buchberger10,CoxLittleOShea-Book07} and
discuss key properties that are important for our approach. In the
second part, Section~\ref{ssec:aig}, we present the necessary background
on AIGs and how we can encode these graphs using polynomial equations.

\subsection{Gröbner Basis}\label{ssec:alg}

\begin{definition}[Term, Monomial, Polynomial, see~{\cite[Chap.~2,
    Sec.~2, Def.~7]{CoxLittleOShea-Book07}}]
  Let $X = (x_1,\dots,x_n)$ be a set of variables and $\KK$ be a field.  A
  \emph{monomial} is a product of the form
  $x_1^{e_1}\cdots x_n^{e_n}$, with exponents $e_1,\dots,e_n \in \NN_0$.
  The set of all monomials is represented by
  $[X]$.  A \emph{term} is a monomial multiplied by a constant,
  written as $\alpha x_1^{e_1}\cdots x_n^{e_n}$ with $\alpha \in \KK$.
  A \emph{polynomial} $p$ is a finite sum of such terms. We denote the
  number of terms in $p$ by $\size(p)$.
\end{definition}

Throughout this section let $\KK[X]=\KK[x_1,\dots,x_n]$ denote the
ring of polynomials in variables $x_1,\dots,x_n$ with coefficients in
the field $\KK$. We write polynomials in their canonical form. That
is, monomials with equal monomials are merged by adding their
coefficients; and terms with coefficients equal to zero are
removed.

\begin{definition}[Degree]
  The degree of a monomial $\sigma= x_1^{e_1}\cdots x_n^{e_n}$ is the sum
  of its exponents, i.e., $\deg(\sigma) = |\sigma| = \sum_{i=1}^{n}{e_i}$.  The
  degree of a polynomial is the maximum degree of its terms.
\end{definition}

The terms within a polynomial are sorted according to a total order to ensure a consistency for algebraic operations.

\begin{definition}[Monomial Order]
  A monomial order is a total order $\prec$ such that for all distinct
  monomials
  $\sigma_1,\sigma_2$ we have (i) $\sigma_1 \prec \sigma_2$ or
  $\sigma_2 \prec \sigma_1$, (ii) every non-empty set of monomials has a
  smallest element and
  (iii)~$\sigma_1\prec\sigma_2\Rightarrow\tau\sigma_1\prec\tau\sigma_2$
  for any term $\tau$.
\end{definition}

\begin{definition}[Lexicographic Order, see~{\cite[Chap.~2, Sec.~2,
    Def.~3]{CoxLittleOShea-Book07}}]
  Let $\sigma_1= x_1^{u_1}\cdots x_n^{u_n}$ and  $\sigma_2=
  x_1^{v_1}\cdots x_n^{v_n}$ be two monomials. We say that $\sigma_1
  \llex \sigma_2$,
  if there exists an index $i$ such that
  with $u_j = v_j$ for all $1\leq j<i$, and $u_i<v_i$.
\end{definition}

\begin{definition}[Degree Reverse Lexicographic Order,
  see~{\cite[Chap.~2, Sec.~2, Def.~6]{CoxLittleOShea-Book07}}]
  Let $\sigma_1= x_1^{u_1}\cdots x_n^{u_n}$ and  $\sigma_2=
  x_1^{v_1}\cdots x_n^{v_n}$ be two monomials. We say that $\sigma_1
  \ldrl \sigma_2$,
  if $|\sigma_1|<|\sigma_2|$ or if $|\sigma_1|=|\sigma_2|$ and
  there exists an index~$i$ such that
  $u_j = v_j$ for all $i<j\leq n$, and $u_i>v_i$.
\end{definition}

Since every polynomial $p \in \KK[X]$ contains only a finite number of
monomials, and these terms are sorted according to a fixed total order, we can
identify the largest monomial in $p$.  This is referred to as the
\emph{leading monomial} of $p$ and denoted as $\lm(p)$.  If
$p=c\tau + \cdots$ and $\lm(p)=\tau$, then $\lc(p)=c$ is called the
\emph{leading coefficient} and $\lt(p)=\lc(p)\lm(p)=c\tau$ is called the
\emph{leading term} of~$p$. The \emph{tail} of $p$ is defined by
$\tail(p) = p-\lt(p)$.

\begin{definition}[Ideal]\label{def:ideal}
  A nonempty subset $I\subseteq \KK[X]$ is called an \emph{ideal} if
  \[
    \forall\,u,v\in I: u+v\in I \quad\text{ and }\quad \forall\,w\in
    \KK[X]\ \forall\,u\in I: wu\in I.
  \]
  If $I\subseteq \KK[X]$ is an ideal, then a set
  $G=\{g_1,\dots,g_m\}\subseteq \KK[X]$ is called a \emph{basis} of~$I$ if $I=\{h_1g_1+\cdots+h_mg_m\mid h_1,\dots,h_m\in \KK[X]\}$,
  i.e., if $I$ consists of all the linear combinations of $g_i$
  with polynomial coefficients.  We denote this by
  $I= \ideal{G}$ and say $I$ is generated by $G$.
\end{definition}

An ideal $I=\ideal{G}\subseteq \KK[X]$ can be interpreted as an equational theory,
 where the basis $G=\{g_1,\dots,g_m\}$ serves as the
set of axioms. The ideal $I = \ideal{G}$ consists of precisely
those polynomials $f$ for which the equation $f=0$ can be derived
from the axioms $g_1=\cdots=g_m=0$ through repeated application of
the rules $u=0\land v=0\Rightarrow u+v=0$ and $u=0\Rightarrow wu=0$.

To check whether a polynomial $f \in \KK[X]$ is contained in an ideal
$I$, we want to solve the so-called \emph{ideal membership problem:}
Given a polynomial $f\in \KK[X]$ and an ideal
$I= \ideal{G} \subseteq \KK[X]$, determine if $f \in I$.

\begin{definition}[Remainder]
  The process of finding a remainder with respect to a set of
  polynomials $G$ is equal to computing the remainder of a polynomial
  division, but extended to multiple divisors, until no further
  division is possible.  The result is a polynomial that represents
  the equivalent class modulo the ideal generated by $G$. We write
  $p \rightarrow_G g$ to denote that $g$ is the polynomial remainder
  of $p$ modulo $G$ and we also say ``$p$ is reduced by $G$''.
\end{definition}

In general, an ideal $I$ has many bases that generate $I$.  We
are particularly interested in bases with certain structural
properties that allow to uniquely answer the ideal membership
problem. Such bases are called \emph{Gröbner
  bases}~\cite{Buchberger65}.
\begin{lemma}[see~{\cite[Chap.~2, Sec.~5, Cor.~6]{CoxLittleOShea-Book07}}]
  Every ideal $I \subseteq \KK[X]$ has a Gröbner basis w.r.t.\ a fixed
  total order.
\end{lemma}

Given an arbitrary basis of an ideal, a Gröbner basis can be computed
using Buchberger's algorithm that repeatedly computes so-called
S-Polynomials. These S-Polynomials are reduced by the polynomials that
are already in the current basis, i.e., calculating the remainder of
polynomial division, and non-zero remainders are added to the ideal
basis. These steps are repeated until the basis is saturated. If all
S-Polynomials reduce to zero the set of ideal generators is a Gröbner
basis~\cite{Buchberger65}.
Generally, Buchberger-like algorithms for computing
Gröbner bases, such as Buchberger's seminal
algorithm~\cite{Buchberger65} or Faugère's~\Ffour~\cite{Faugere1999}
algorithm, have a worst-case time complexity double exponential in the
number of variables, because of the size of the
output~\cite{MayrM1982}. Still, in practice, these algorithms behave
in
general way better for $\ldrl$ than for other monomial orders, such as $\llex$.

We will not introduce this process more
formally, as we will treat the computation of a Gröbner basis as a
black-box technique in our approach.
The following properties are more important for us.

\begin{lemma}[see~{\cite[Chap.~2, Sec.~6, Prop.~1]{CoxLittleOShea-Book07}}]\label{lemma:gbx}
  If $G = \{g_1,\ldots,g_m\}$ is a Gröbner basis, then every
  $f\in\KK[X]$ has a unique polynomial remainder $r$ with respect
  to~$G$. Furthermore, it holds that $f-r \in \ideal{G}$, which implies that
  $f$ is contained in the ideal $I=\ideal{G}$ if, and only if, $f \rightarrow_G 0$.
\end{lemma}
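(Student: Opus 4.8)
The plan is to derive the statement from two ingredients: the termination and correctness of the multivariate division process, and the defining property of a Gröbner basis, namely that $\ideal{\lt(g_1),\dots,\lt(g_m)}$ equals the full leading-term ideal $\ideal{\lt(f)\mid f\in I\setminus\{0\}}$ of $I=\ideal{G}$. The first ingredient is completely generic (it holds for any finite generating set and any monomial order); the Gröbner basis hypothesis is only used once, to upgrade ``a remainder exists'' to ``the remainder is unique''.

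First I would argue existence of a remainder and record the cofactor identity. The division process applied to $f$ by $G$ keeps an active polynomial $p$ (initially $f$); at each step it either cancels $\lt(p)$ against some $\lt(g_i)$ dividing it, by subtracting $\frac{\lt(p)}{\lt(g_i)}\,g_i$ from $p$, or, when no $\lt(g_i)$ divides $\lt(p)$, it moves $\lt(p)$ into the remainder accumulator. In either case the leading monomial of the part still to be processed strictly decreases in $\prec$; since by property~(ii) of a monomial order $\prec$ is a well-order, the process halts after finitely many steps and returns a polynomial $r$ none of whose monomials is divisible by any $\lt(g_i)$. Collecting the term multiples subtracted along the way yields $f = h_1g_1+\dots+h_mg_m+r$ with $h_i\in\KK[X]$, hence $f-r=\sum_i h_ig_i\in\ideal{G}$. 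This establishes the ``furthermore'' clause for any such $r$, irrespective of the choices made during division.

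The crux is uniqueness, and this is where the Gröbner basis property is essential. Suppose $r_1$ and $r_2$ are both remainders of $f$ modulo $G$. Then $r_1-r_2=(f-r_2)-(f-r_1)\in\ideal{G}=I$, while no monomial of $r_1$ or $r_2$, hence no monomial of $r_1-r_2$, is divisible by any $\lt(g_i)$. If $r_1-r_2\neq 0$, then $\lt(r_1-r_2)$ lies in the leading-term ideal of $I$, which by the Gröbner basis property equals $\ideal{\lt(g_1),\dots,\lt(g_m)}$; being a monomial ideal, this forces $\lm(r_1-r_2)$ to be a multiple of some $\lm(g_i)$, contradicting the previous sentence. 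Therefore $r_1=r_2$, so the remainder is unique, and ``$f\rightarrow_G r$'' is a well-defined operation.

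Finally the membership criterion drops out. If $f\rightarrow_G 0$, then $f=f-0\in\ideal{G}=I$ by the cofactor identity. Conversely, if $f\in I$, pick any remainder $r$; then $r=f-(f-r)\in I$ since $f-r\in\ideal{G}$, yet $r$ has no monomial divisible by any $\lt(g_i)$, so the same leading-term argument as above (with $r$ in place of $r_1-r_2$) forces $r=0$, i.e.\ $f\rightarrow_G 0$. I expect the uniqueness step to be the main obstacle: it is precisely the assertion that fails for an arbitrary generating set, and it hinges on translating ``$r_1-r_2$ has no $\lt$-reducible monomial'' into a contradiction through the identification of the leading-term ideal of $I$ with $\ideal{\lt(G)}$ — the one nontrivial fact about Gröbner bases that the black-box treatment earlier in the section supplies.
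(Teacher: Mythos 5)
Your proof is correct and is essentially the argument behind the paper's cited source (Cox--Little--O'Shea, Chap.~2, Sec.~6, Prop.~1), which the paper itself does not reprove: termination of division via the well-ordering property, the cofactor identity $f-r\in\ideal{G}$, and uniqueness plus the membership criterion via the fact that a nonzero element of $I$ has its leading term in $\ideal{\lt(g_1),\dots,\lt(g_m)}$. The only nitpick is that the Gröbner basis hypothesis is used twice, not once --- for uniqueness and again for the converse of the membership test --- but you carry out both applications correctly.
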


Depending on the information one seeks, some Gröbner bases are more
useful than others. Gröbner bases w.r.t.~$\llex$ are the tool of choice
for solving polynomial systems but are, in general, more expensive to
compute than degree-based Gröbner bases. Yet, \emph{change of order}
algorithms, such as the seminal \textsc{FGLM} one~\cite{FaugereGLM1993} can
convert a Gröbner basis into another one for different order. In our setting of verifying AIG the complexity would be
in $O(n 2^{3 n})$, where $n$ is the number of input variables of the
AIG. Hence, for large $n$, this is impractical.
Variants of \textsc{FGLM}
exploiting the structure of the input and output Gröbner bases under
some genericity assumptions exist,
we can mention~\cite{BerthomieuNS2022,FaugereGHR2014,FaugereM2017,NeigerS2020},
but they are mostly designed for solving polynomial systems. As a
consequence, they consider the input Gröbner basis to be for
a degree-based order, such as $\ldrl$, and the output Gröbner basis to
be for
$\llex$.

\subsection{And-Inverter Graphs} \label{ssec:aig}

An \emph{and-inverter graph} (AIG)~\cite{Kuehlmann-TCAD2002} is a
special case of a directed acyclic graph (DAG).  They
are useful tools to represent Boolean functions and logic circuits and
provide a compact and efficient way to describe logical expressions.

\begin{definition}[AIG]
  An AIG operates over Boolean variables.  Every node
  expresses a logical conjunction between its two input variables,
  which are depicted by incoming edges in the lower part of
  the node. We distinguish two types of inputs, primary inputs (of the graph)
  and intermediate nodes.  Outputs of the node are represented by an
  edge in the upper half. If an edge is marked, it indicates that the
  variable is negated.
\end{definition}

\begin{definition}[Specification]
  The specification of an AIG is a polynomial equation
  $\spec \in \KK[X]$ that relates the outputs of an AIG to
  its primary inputs.
\end{definition}

Together with the specification polynomial, we fix the polynomial ring $\KK[X]$
of the encoding. Although the nodes in an AIG compute logical
conjunction over Boolean variables, the specification can encode
richer relations. Hence, the encoding is not restricted to the Boolean
ring $\BB[X]$, but may include different coefficient domains, such as
integers or rationals.

\begin{definition}[Gate Polynomials]\label{def:gatepoly}
  Each node in an AIG can be encoded by a corresponding polynomial
  equation that models the logical conjunction.  Nodes in an AIG
  raise
  four
  types of equations, depending if either none,
  the first, the second,
  or both inputs are negated. Let $g$ be an AIG node with inputs
  $a, b$:
  \[
    \arraycolsep=10pt
    \begin{array}{lcr}
      \mbox{Gate constraint} &            & \mbox{Gate polynomial}\\
      g = a \land b						& \Rightarrow &		g - ab = 0 \\
      g = \neg a \land b      & \Rightarrow &   g-(1-a)b =  g + ab - b  = 0\\
      g = a \land \neg b      & \Rightarrow &   g-a(1-b) =  g + ab - a  = 0\\
      g = \neg a \land \neg b & \Rightarrow &   g -(1-a)(1-b) =  g - ab + b +a -1 = 0  \end{array}\]
\end{definition}

The correctness of the encoding can easily be checked by truth tables.
Furthermore, observe that the degree of the gate
polynomials is always two.

\begin{definition}[Boolean Input Polynomial]\label{def:boolpoly}
  For every primary input $a_i$ of the AIG we define a corresponding
  \emph{Boolean input polynomial} $a_i(a_i-1) = a_i^2-a_i = 0$ that
  encodes that the variable can only take the values 0 and 1.
\end{definition}

As we will only consider polynomial equations with right hand side
zero, we will from now on shorten our notation and write ``$f$'' instead of ``$f =0$''.

\begin{figure*}[tb]
  \centering
  \begin{minipage}{0.35\textwidth}
    \includegraphics[width=.99\textwidth]{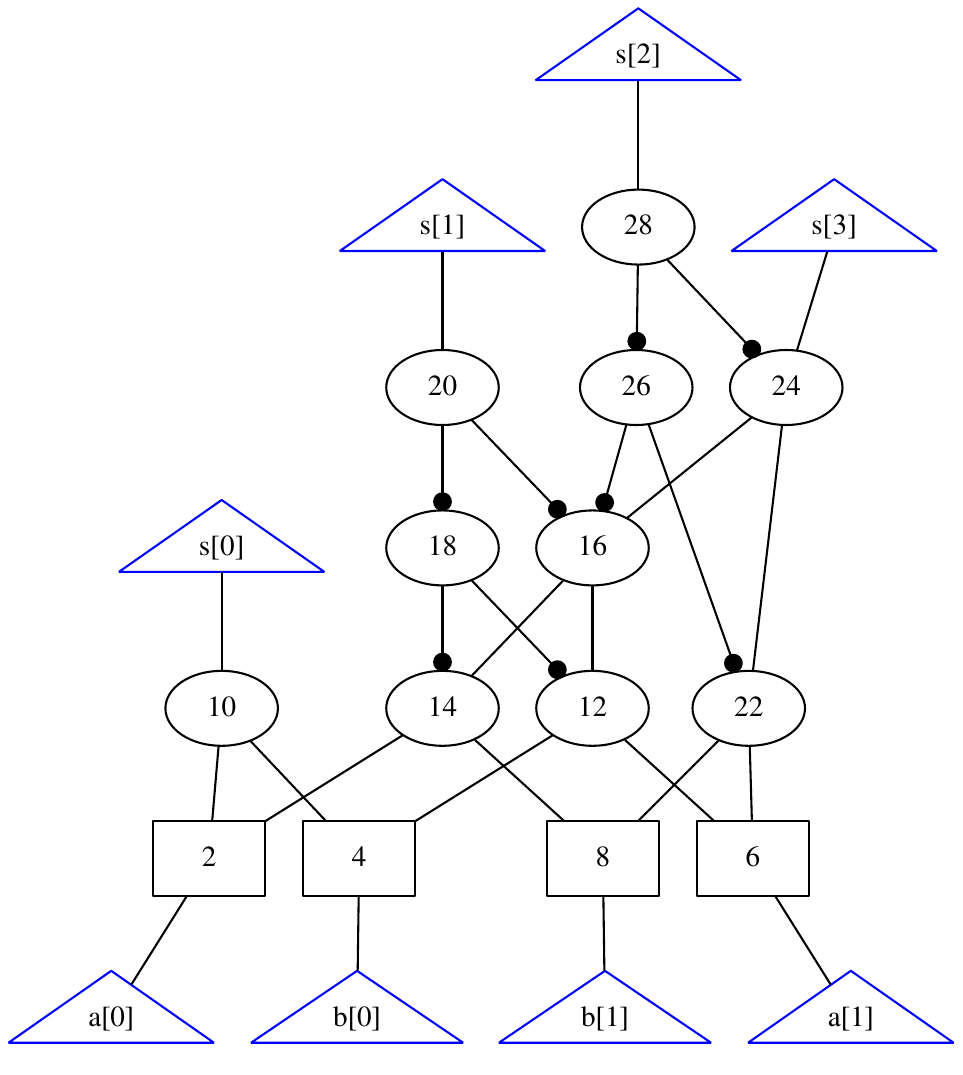}
  \end{minipage}\hfill
  \begin{minipage}{0.64\textwidth}
    \begin{center}
      \scriptsize $
      \begin{array}{c@{\qquad}l@{\qquad}l}
        \mbox{Index} &\mbox{Gate Polynomial} &\mbox{Gate constraint} \\
        g_0 &   s_3 -                             \ell_{24}  &   s_3  = \ell_{24}                        \\
        g_1 &   s_2 -                             \ell_{28}  &   s_2  = \ell_{28}                        \\
        g_2 &\ell_{28} - \ell_{26}\ell_{24} + \ell_{26} + \ell_{24} - 1  & \ell_{28} = \neg \ell_{26} \land \neg \ell_{24} \\
        g_3 &\ell_{26} - \ell_{22}\ell_{16} + \ell_{22} + \ell_{16} - 1  & \ell_{26} = \neg \ell_{22} \land \neg \ell_{16} \\
        g_4 &\ell_{24} - \ell_{22}\ell_{16}                        & \ell_{24} = \ell_{22} \land \ell_{16}           \\
        g_5 &\ell_{22} - b_1a_1                              & \ell_{22} = b_1 \land a_1                 \\
        g_6 &   s_1 -                             \ell_{20}  &   s_1  = \ell_{20}                        \\
        g_7 &\ell_{20} - \ell_{18}\ell_{16} +\ell_{18} + \ell_{16} - 1  & \ell_{20} = \neg \ell_{18} \land \neg \ell_{16} \\
        g_8 &\ell_{18} - \ell_{14}\ell_{12} + \ell_{14} + \ell_{12} - 1  & \ell_{18} = \neg \ell_{14} \land \neg \ell_{12} \\
        g_9 &\ell_{16} - \ell_{14}\ell_{12}                        & \ell_{16} = \ell_{14} \land \ell_{12}           \\
        g_{10} &\ell_{14} - b_1a_0                              & \ell_{14} = b_1 \land a_0                 \\
        g_{11} &\ell_{12} - b_0a_1                              & \ell_{12} = b_0 \land a_1                 \\
        g_{12} &   s_0 -                             \ell_{10}  &   s_0  = \ell_{10}                        \\
        g_{13} &\ell_{10} - b_0a_0                              & \ell_{10} = b_0 \land a_0                 \\
      \end{array}
      $\end{center}
  \end{minipage}

  \vspace{2ex}
 \scriptsize
  Boolean Input Polynomials: $a_1^2-a_1, a_0^2-a_0, b_1^2-b_1, b_0^2-b_0$ \\
  Spec $\spec$:
  $\sum_{i=0}^3 2^is_i = (\sum_{i=0}^1 2^ia_i)(\sum_{i=0}^1 2^ib_i) = 8s_3+4s_2+2s_1+s_0 - 4a_1b_1 - 2a_1b_0-2a_0b_1-a_0b_0$
  \caption{AIG and polynomial encoding of a 2-bit multiplier
    in the ring $\QQ[X]$.}
  \label{fig:btor2}
\end{figure*}

\begin{example}\label{example:gate}
  Figure~\ref{fig:btor2} shows an AIG representing a 2-bit
  multiplier.
  We denote the primary
  inputs by $a_0, a_1, b_0, b_1$ and outputs by
  $s_0, s_1, s_2, s_3$.  The internal nodes are denoted by
  $\ell_i$, with subscript $i$ corresponding to the number of the respective AIG node. The right hand side of Figure~\ref{fig:btor2} lists
  the gate constraints as well as the corresponding gate polynomials, which are derived using the encoding presented in Def.~\ref{def:gatepoly}. We furthermore list the Boolean input polynomials (Def.~\ref{def:boolpoly}) and the specification polynomial $\spec \in \QQ[X]$,
  which relates that $S = A \cdot B $, for
  $S = \sum_{i=0}^3 2^is_i$, $A = \sum_{i=0}^1 2^ia_i$, and
  $B = \sum_{i=0}^1 2^ib_i$.
\end{example}

\section{Verification using Degree Reverse Lexicographic
  Order}\label{sec:grevlex_verification}

In this section we will lay out the theoretical foundation of our
proposed approach for extracting linear relations from the Gröbner
basis that is used for reduction.

Existing algebraic verification techniques
for acyclic graphs encode the circuit as a polynomial using a
lexicographic term ordering where the variables are sorted according
to a reverse topological term ordering
(RTTO)~\cite{LvKallaEnescu-TCAD13}. This has the benefit that due to
repeated application of Buchberger's product criterion,
see~\cite[Chap.~2, Sec.~10, Prop.~1]{CoxLittleOShea-Book07}, the set
of gate polynomials together with the Boolean input polynomials
automatically form a Gröbner basis~\cite{KaufmannBiereKauers-FMSD19}.
Since the leading terms of the gate polynomials consist of one single
variable, polynomial division
comes down  to substitution.  The variables in the
specification are substituted by the corresponding tails of the gate
polynomials until no further rewriting is possible.
The graph fulfills
its specification if, and only if, the final result is zero.

Generally, this implies that  the
degree of the intermediate reduction results increases, since the tails of the
gate polynomials have a higher degree than their linear leading terms.
Substituting those variables in non-linear monomials has the potential to lead to a monomial blow-up during the reduction.

\begin{example}\label{example:gatelex}
  Consider again the polynomials of
  Example~\ref{example:gate}.  Initially $\size(\spec) = 8$
  and $\deg(\spec) = 2$. After four rewriting steps
  we have the following intermediate reduction result: $\spec
  \rightarrow_{\{g_1,g_2,g_3,g_4\}}
  4\ell_{24}\ell_{22}\ell_{16}-4\ell_{24}\ell_{22}-4\ell_{24}\ell_{16}+8\ell_{24}-4\ell_{22}\ell_{16}+4\ell_{22}+2s_{1}+4\ell_{16}+s_0
  - 4a_1b_1 - 2a_1b_0-2a_0b_1-a_0b_0$ which has degree 3 and
  consists of 13 monomials.
\end{example}

 We will now
impose a different ordering on the set of gate polynomials
that takes the degree of the polynomials into account.  That
is, we compute a Gröbner basis based on the degree reverse lexicographic
monomial ordering, where the monomials in a polynomial are first
sorted according to their degree.

Our approach is based on the following
result that we prove in Theorem~\ref{thm:linGB}:
\emph{If
  the specification polynomial is linear, then the ideal
  membership of the specification can be decided using only
  the linear polynomials of the Gröbner basis.}

We linearize the specification by replacing all
non-linear monomials~$\sigma_i$ in~$\spec$ with new extension variables~$t_i$.
For each
replacement we generate a new polynomial constraint
$t_i - \sigma_i$ and add it to the set of gate
polynomials. The idea of linearization is, for instance, already
used in~\cite{LiewNordstroem-FMCAD20} in the context of cutting planes.

The next
lemma proves that if the specification~$\spec$ is
contained in the ideal generated by the gate polynomials, then
the linearized specification $\spec_{\lin}$ is contained in the
ideal generated by the gate polynomials and \emph{extension
  polynomials}.

\begin{lemma}\label{lemma:linspec}
  Let $p \in \KK[X]$, $I \subseteq \KK[X]$.  Let
  $\Sigma = \{t_i - \sigma_i \mid t_i \notin X \land \sigma_i \in p \land
  \deg(\sigma_i) > 1 \}$.  Let $p_{\lin}$ be
  the polynomial where every non-linear monomial of $p$ is
  replaced by a corresponding extension variable $t_i$. Then
  we have $p \in I$ if, and only if,
  $p_{\lin} \in I + \ideal{\Sigma}$.
\end{lemma}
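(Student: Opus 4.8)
The plan is to prove both directions by exhibiting an explicit change of variables that relates membership in $I$ to membership in $I + \ideal{\Sigma}$. Write $p = p_0 + \sum_i c_i \sigma_i$, where $p_0$ collects the linear part of $p$ and the $\sigma_i$ are the non-linear monomials, so that $p_{\lin} = p_0 + \sum_i c_i t_i$ by construction. Note the key algebraic identity $p_{\lin} - p = \sum_i c_i (t_i - \sigma_i)$, which already lies in $\ideal{\Sigma}$.

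For the ``only if'' direction, assume $p \in I$. Since $t_i - \sigma_i \in \ideal{\Sigma}$ for each $i$, we get $p_{\lin} = p + \sum_i c_i(t_i - \sigma_i) \in I + \ideal{\Sigma}$ immediately, because $I \subseteq I + \ideal{\Sigma}$ and the correction term is in $\ideal{\Sigma}$. This direction is essentially a one-line computation.

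For the ``if'' direction, assume $p_{\lin} \in I + \ideal{\Sigma}$, i.e.\ $p_{\lin} = f + \sum_i h_i (t_i - \sigma_i)$ for some $f \in I$ and $h_i \in \KK[X \cup \{t_j\}]$. The idea is to evaluate this equation under the ring homomorphism $\varphi$ that sends each extension variable $t_i \mapsto \sigma_i$ and fixes $X$. Under $\varphi$ we have $\varphi(t_i - \sigma_i) = 0$, so $\varphi(p_{\lin}) = \varphi(f)$. On the left, $\varphi(p_{\lin}) = p_0 + \sum_i c_i \sigma_i = p$, since $p_{\lin}$ is linear in the $t_i$ and substituting back recovers $p$. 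On the right, $\varphi(f) \in \varphi(I)$; since $I \subseteq \KK[X]$ and $\varphi$ restricts to the identity on $\KK[X]$, we have $\varphi(f) = f \in I$. Hence $p = f \in I$.

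The main subtlety to handle carefully is the ambient ring: $I$ is stated to live in $\KK[X]$, but the sum $I + \ideal{\Sigma}$ and the combination defining membership must be taken in the larger ring $\KK[X \cup \{t_i\}]$, since the $t_i$ are fresh variables. One must be explicit that $\ideal{\Sigma}$ and $I + \ideal{\Sigma}$ are ideals of $\KK[X \cup \{t_i\}]$, and that the generators of $I$ are viewed inside this larger ring. The homomorphism argument is the clean way to see that this extension does not introduce spurious membership — it is a section of the inclusion $\KK[X] \hookrightarrow \KK[X \cup \{t_i\}]$ that collapses $\ideal{\Sigma}$ to zero. Everything else is routine bookkeeping about the decomposition of $p$ into linear and non-linear parts.
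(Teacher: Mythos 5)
Your proof is correct and follows the same overall route as the paper: the same decomposition of $p$ into its linear part and non-linear monomials, and the same key identity $p_{\lin}-p=\sum_i c_i(t_i-\sigma_i)\in\ideal{\Sigma}$, which gives the forward direction in one line in both versions. The only place you genuinely diverge is the converse. The paper deduces $p\in(I+\ideal{\Sigma})\cap\KK[X]$ and then asserts that this intersection equals $I$ ``by construction''; you instead make that step explicit by applying the evaluation homomorphism $\varphi\colon t_i\mapsto\sigma_i$, which kills $\ideal{\Sigma}$, maps the extension of $I$ to the larger ring back into $I$, and sends $p_{\lin}$ to $p$. This is precisely the standard argument that justifies the paper's unproved elimination claim, so your version is, if anything, more complete. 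One cosmetic slip: in the converse you write $\varphi(f)=f$ for $f$ in the extended ideal, but $f$ may involve the $t_i$ through its polynomial coefficients, so $\varphi(f)$ need not literally equal $f$; what you actually need (and what your argument delivers) is only that $\varphi(f)\in I$, since $\varphi$ fixes the generators of $I$ and sends all coefficients into $\KK[X]$. Your remark about being careful that $I+\ideal{\Sigma}$ lives in $\KK[X\cup\{t_i\}]$ is well taken and is glossed over in the paper's statement.
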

\begin{proof}
  Let $p=\sum_{\sigma_i\in p}c_i \sigma_i$, where the $c_i$'s are in
  $\KK$. By definition, we can write
  $p_{\lin}=\sum_{\sigma_i\in p}c_i t_i$. Thus, $p_{\lin}
  = \sum_{\sigma_i\in p}c_i(t_i - \sigma_i)
  + \sum_{\sigma_i\in p}c_i\sigma_i$. By hypothesis, the first sum is
  in $\ideal{\Sigma}$ and the second one, which is $p$, only depends
  in variables~$X$. Therefore,
  if $p\in I$, then $p_{\lin}\in I+\ideal{\Sigma}$. Conversely, if
  $p_{\lin}\in I+\ideal{\Sigma}$, then $p\in
  (I+\ideal{\Sigma})\cap\KK[X]=I$ by construction of $I+\ideal{\Sigma}$.
\end{proof}

We now show soundness and completeness of our observation.
That is, if we
want to show ideal membership of a linear polynomial, the
Gröbner basis of the ideal contains a set of linear
polynomials $G_1$ that suffice for deriving the ideal
membership, all non-linear polynomials of the Gröbner basis can be
neglected. This will shift the computational difficulties from
the reduction to the Gröbner basis generation.

\begin{theorem}\label{thm:linGB}
  Let $p \in \KK[X]$ with $\deg(p) = 1$, $I \subseteq
  \KK[X]$ be an ideal. Let $G$ be a Gröbner basis of $I$ with respect
  to $\ldrl$ and let
  $G_1 = \{g \in G \mid \deg(g) \leq 1\}$.  We have
  $p \in I$ if, and only if, $p \rightarrow_{G_1} 0$. In
  particular, $p = \alpha_1g_1 + \cdots + \alpha_mg_m$ with
  $g_i \in G_1$, $\alpha_i \in \KK$.
\end{theorem}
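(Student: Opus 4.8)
The plan is to use the defining property of the degree reverse lexicographic order $\ldrl$, namely that it is a \emph{degree-compatible} (graded) monomial order: whenever $\sigma_1 \ldrl \sigma_2$ with $|\sigma_1| \neq |\sigma_2|$, the lower-degree monomial is the smaller one. The key consequence is that for any polynomial $g \in \KK[X]$, the leading monomial $\lm(g)$ has degree equal to $\deg(g)$. This immediately gives a clean split of the Gröbner basis $G$ by degree: the leading terms of polynomials in $G_1 = \{g \in G \mid \deg(g) \le 1\}$ are of degree $\le 1$ (i.e.\ constants or single variables), while every $g \in G \setminus G_1$ has $\deg(\lm(g)) \ge 2$.

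The forward direction is the heart of the argument. Suppose $p \in I$ with $\deg(p) = 1$. By Lemma~\ref{lemma:gbx}, $p \rightarrow_G 0$, so there is a reduction sequence $p = r_0 \to r_1 \to \cdots \to r_k = 0$ where each step cancels the leading term of the current remainder against some $\lt(g_{i})$ with $g_i \in G$. The first claim to establish is that no polynomial from $G \setminus G_1$ is ever used in this sequence: since $\ldrl$ is graded and $\deg(p)=1$, we have $\deg(r_0) \le 1$; I will argue by induction that $\deg(r_j) \le 1$ for all $j$, because reducing a degree-$\le 1$ polynomial can only use a divisor whose leading monomial divides some monomial of $r_j$, hence has degree $\le 1$, hence lies in $G_1$ — and subtracting a $\KK$-multiple of such a (degree $\le 1$) polynomial keeps the degree at $\le 1$. (Here I use that for $g \in G_1$, $\lm(g)$ is a variable or constant, so the cofactor in the division step is a constant.) Therefore the entire reduction $p \to \cdots \to 0$ can be carried out using only $G_1$, i.e.\ $p \rightarrow_{G_1} 0$; and because every cofactor is a constant, tracing the reduction backwards expresses $p = \alpha_1 g_1 + \cdots + \alpha_m g_m$ with $g_i \in G_1$ and $\alpha_i \in \KK$, which is the ``in particular'' statement.

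The converse is routine: if $p \rightarrow_{G_1} 0$ then $p \in \ideal{G_1} \subseteq \ideal{G} = I$, since $G_1 \subseteq G$. I would also note as a sanity check that $\ideal{G_1}$ need not equal $I$ in general — the point is only that the \emph{linear} elements of $I$ already reduce to zero via $G_1$, which is exactly what the statement asserts.

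\textbf{Anticipated obstacle.}
The main thing requiring care is the induction that the degree never exceeds $1$ during the reduction, and in particular that the only applicable divisors are those in $G_1$. This hinges essentially on $\ldrl$ being degree-compatible: if one used a non-graded order such as $\llex$, a degree-$1$ polynomial could have its reduction route through a higher-degree basis element, and the argument would collapse. A secondary subtlety is the precise bookkeeping needed to recover the explicit $\KK$-linear combination $p = \sum \alpha_i g_i$: one must observe that at each step the monomial being eliminated is linear and $\lm(g_i)$ is a single variable, so the multiplier used is a field element rather than a nonconstant term; summing these contributions over the (finite) reduction sequence, and collecting repeated uses of the same $g_i$, yields the claimed form.
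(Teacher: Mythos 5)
Your proof is correct and follows essentially the same route as the paper's: both rest on the degree-compatibility of $\ldrl$, so that the division algorithm applied to a linear $p$ only ever selects divisors from $G_1$ and keeps every intermediate remainder at degree at most $1$, with the converse being immediate from $G_1 \subseteq G$. The only cosmetic difference is that the paper disposes of the degenerate case where $G_1$ contains a non-zero constant (so $I=\KK[X]$) up front, whereas you fold it into the main induction.
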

\begin{proof}
  First, let us observe that if $G_1$ contains a non-zero constant polynomial,
  then $I=\KK[X]$ and $p$
  necessarily reduces to $0$ by $G_1$.

  We now assume that $G_1$ only contains polynomials of degree
  $1$. For $g\in G_1$, we write $g=\lt(g)+\tail(g)$. Because polynomials in $G_1$ are  ordered with respect to $\ldrl$, we have
  $\deg(\lm(g))=1$ and $\deg(\tail(g))\leq 1$. Since
  $\deg(\lm(p))=1$ the
  division algorithm for computing the reduction of $p$~by~$G$,
  see~\cite[Chap.~2, Sec.~3]{CoxLittleOShea-Book07}, will
  only select polynomials in $G$ whose leading monomials also have
  degree $1$, i.e.\ those in $G_1$. The reduction step will replace
  $p$ by $p-\alpha_i g_i=\tail(p)-\alpha_i\tail(g_i)$, for
  $\alpha_i\in\KK^*$ and some $g_i$,  which has degree less or
  equal to $1$.

  Since $p\in I$ if, and only if, $p\rightarrow_G 0$, we have
  $p\in I$ if, and only if, $p\rightarrow_{G_1} 0$.
\end{proof}

\emph{We emphasize that the theory, and in particular the result of Theorem~\ref{thm:linGB}, can be applied to general DAGs.}
The key property of the graph is that it must be \emph{acyclic}. If it has cycles one cannot canonically compare
the variables, hence it is not possible to derive a total term order and compute a Gröbner basis.

The conclusion of Theorem~\ref{thm:linGB} moreover shows that we can significantly
simplify the algorithm for checking the ideal membership of $\spec$. Instead
of repeated polynomial substitution, with potential non-linear
intermediate reduction results, we pick $g_i \in G_1$, such
that $\lm(g_i) = \lm(\spec)$, multiply $g_i$ by a constant $\alpha_i$
such that $\lc(\spec) = -\alpha_i\lc(g_i)$ and add those two
polynomials. Hence, we have replaced polynomial division by
linear polynomial operations.

Therefore, we can apply the following approach to verify
that an AIG fulfills its specification, see
Alg.~\ref{alg:lingbred}. We first encode the graph as a set of
polynomials $G_{\init}$ (line 1), and linearize the specification
(line~2)
as described in Lemma~\ref{lemma:linspec}. The set $G_{\ext}$ contains the extension polynomials.
In the next step we compute a Gröbner basis w.r.t.\ $\ldrl$
(line~3) and extract the linear
polynomials $G_1$ (line~4). We calculate the remainder of the
specification modulo the linear elements of the Gröbner basis
(lines~5--8) until no further reduction is possible and return whether the final result is zero.
The correctness of Alg.~\ref{alg:lingbred} follows from Theorem~\ref{thm:linGB}.

\begin{algorithm}[t]\small
  \SetKwInOut{Input}{Input} \SetKwInOut{Output}{Output}
  \Input{Circuit $C$ in AIG format, Specification polynomial
    $\spec$} \Output{Determine whether $C$ fulfills the
    specification} $G_{\init} \leftarrow \text{Gate-Polynomials}(C) \cup \text{Boolean-Input-Polynomials}(C)$\;
  $\spec_{\lin}, G_{\ext} \leftarrow$ Linearize($\spec$)\;
  $G_{\DRL} \leftarrow$
  Compute-$\ldrl$-Gröbner-Basis($G_{\init} \cup
  G_{\ext}$)\;
  $G_1 \leftarrow \{g \mid g \in G_{\DRL}
  \land \deg(g) \leq 1\}$\;
  \While{$\lm(\spec_{\lin}) \in \{\lm(g) | g \in
    G_1\}$}{
    $p_{\lin} \leftarrow g \in G_1$ such that $\lm(g) = \lm(\spec_{\lin})$\;
    \lIf{$\nexists\,p_{\lin}$}{\Return $\bot$}
    $\spec_{\lin} \leftarrow$  Linear-Reduce($\spec_{\lin}, p_{\lin}$)\;
  }
  \Return $\spec_{\lin} = 0$
  \caption{Linear Gröbner basis reduction}\label{alg:lingbred}
\end{algorithm}

\begin{example}\label{example:gatedlex}
  Consider again the AIG of Example~\ref{example:gate}.  First
  of all we define four extension variables $t_{ij}$ to encode
  the non-linear terms $a_ib_j$ for $i,j \in \{0,1\}$ and
  rewrite the specification to
  $8s_3+4s_2+2s_1+s_0 - 4t_{11} - 2t_{10}-2t_{01}-t_{00}$. The
four polynomial equations $ t_{ij} - a_ib_j$ are added to the
  set of gate polynomials and we compute a Gröbner basis w.r.t.\
  $\ldrl$. The full Gröbner basis consisting of 52
  polynomials is listed in Appendix~\ref{appendix:deggb-2mult}.

  Important for us are the first thirteen elements of the
  Gröbner basis, as those are the linear polynomials $G_1$:

  $
  \begin{array}{l@{=\quad}l@{\qquad\qquad}l@{=\quad}l}

    g_1 & \ell_{10}-t_{00} & g_8 & \ell_{22}-t_{11} \\
    g_2 & s_0-\ell_{10}    & g_9 & \ell_{24}-\ell_{16} \\
    g_3 & \ell_{12}-t_{10} & g_{10} & \ell_{26}-\ell_{24}+\ell_{16}+t_{11}-1 \\
    g_4 & \ell_{14}-t_{01} & g_{11} & \ell_{28}+2\ell_{24}-\ell_{16}-t_{11} \\
    g_5 & \ell_{18}-\ell_{16}+t_{01}+t_{10}-1 & g_{12} & s_2-\ell_{28} \\
    g_6 & \ell_{20}+2\ell_{16}-t_{01}-t_{10} & g_{13} & s_3-\ell_{24} \\
    g_7 & s_1-\ell_{20} \\
  \end{array}
  $

  We derive that $\spec \in \ideal{G_1}$
  as
  $\spec =
  8g_{13}+4g_{12}+4g_{11}+2g_7+2g_6+g_2+g_1$.
\end{example}

In practice, however line~3 of Alg.~\ref{alg:lingbred} turns
out to be a bottleneck, since computing a single $\ldrl$-Gröbner
basis does not scale for larger AIGs.

We have also seen in
Example~\ref{example:gate} that $39$ out of $52$ polynomials in
the computed Gröbner basis are non-linear. While these
polynomials are needed to compute the
full Gröbner basis of the ideal, they are not required for
solving the ideal membership problem of the linear specification.
Furthermore, from the $13$ linear polynomials, only $7$ are used to generate
the specification.  Hence, our generated Gröbner basis
contains redundant and/or useless information. We will now discuss a method to reduce
the overhead by computing local Gröbner bases.


\section{Locally extracting Linear
  Polynomials}\label{sec:practical_grevlex_verification}

The core idea of the optimized approach is to start from a
$\llex$-Gröbner basis and incrementally extract linear polynomials
from a smaller set of gate polynomials instead of computing a
single
full $\ldrl$-Gröbner basis
for the whole
input AIG.  The algorithm is outlined in
Alg.~\ref{alg:specred} and will be explained in more detail
throughout the remainder of this section.

In a nutshell, we first encode the circuit using a lexicographic term
ordering~(line~1) and linearize the specification
polynomial (line~2) with respect to the given circuit.  After
some preprocessing where we extract easily derivable
linear polynomials (line~3), we rewrite the specification by
generating linear polynomials on the fly (lines~4--9). We pick
the gate polynomial $p$ that has the same leading term as
the intermediate reduction result (line~5) and compute a
$\ldrl$-Gröbner basis for a sub-circuit of $C$ that includes
$p$ (line~6) to receive the linearized polynomial $p_{\lin}$ that
we use for reducing the specification (line~7).
Let us now go into more detail of every step.

\begin{algorithm}[t]\small
  \SetKwInOut{Input}{Input} \SetKwInOut{Output}{Output}
  \SetKwComment{Comment}{\qquad \qquad\qquad$\triangleright$}{}
  \SetKw{And}{and}

  \Input{Circuit $C$ in AIG format, Specification polynomial $\spec$}
  \Output{Determine whether $C$ fulfills the specification}
  $G_{\init} \leftarrow$ Row-Wise-RTTO-Polynomial-Encoding($C$)\;
  $\spec_{\lin},G_{\ext} \leftarrow$
  Linearize-Spec-wrt-AIG($\spec, G_{\init}$)\; $G \leftarrow$
  Preprocessing($G_{\ext}$) \Comment{See
    Section~\ref{ssec:preprocessing}}

  \While{$\lm(\spec_{\lin}) \in \{\lm(g) | g \in G\}$}{
    $p \leftarrow g \in G $ such that
    $\lm(g) = \lm(\spec_{\lin})$\;
    $p_{\lin} \leftarrow$
    Linearize-Single-Polynomial($p, G$) \Comment{See Section~\ref{ssec:linearrewr}}
    \lIf{$\nexists\,p_{\lin}$}{\Return $\bot$}
    $\spec_{\lin} \leftarrow$ Linear-Reduce($\spec_{\lin}, p_{\lin}$)\;
  }
  \Return $\spec_{\lin} = 0$
  \caption{Verification-via-Locally-extracting-Linear-Polynomials}\label{alg:specred}
\end{algorithm}

\paragraph{Encoding.}
The AIG is encoded using gate polynomials and Boolean input
polynomials as described in Definitions~\ref{def:gatepoly}
and~\ref{def:boolpoly} using a lexicographic term ordering.  We
choose a row-wise variable ordering that sorts variables based on
their distance to the inputs.  If nodes have an equal distance, we
sort according to the value of the AIG node.  For example, we would
sort the variables in Example~\ref{example:gate} as
$a_0 \llex b_0 \llex a_1 \llex b_1 \llex \ell_{10} \llex \ell_{12}
\llex \ell_{14} \llex \ell_{22} \llex s_{0} \llex \ell_{16} \llex
\ell_{18} \llex \ell_{20} \llex \ell_{24} \llex \ell_{26} \llex
\ell_{28} \llex s_1 \llex s_3 \llex s_2$.  In
this order the output variable of a gate is always greater than its
input variables, which automatically generates a $\llex$-Gröbner basis.

Theorem~4 in~\cite{KaufmannBiereKauers-FMSD19} has shown that we can
locally rewrite elements of the $\llex$-Gröbner basis without
jeopardizing the Gröbner basis property as long as the leading monomials remain the same.
We apply the same technique and locally rewrite gate polynomials from quadratic to linear
polynomials that will be used in the reduction.

\paragraph{Linearization of the Specification.}
Lemma~\ref{lemma:linspec} provides us with a methodology on how to
linearize the specification $\spec$ by introducing extension
variables to represent non-linear terms.  However, some of the terms
might already be contained in the polynomial encoding of the
circuit. For those terms we can simply use the corresponding leading
term in the specification.  We first swipe through the set of
gate polynomials and check whether the non-linear tail of a
gate polynomial is contained in the specification. If this is the
case, we replace the non-linear term by the corresponding leading
term.

For instance, in Example~\ref{example:gatedlex}
we have the gate polynomial $\ell_{22}-a_1b_1$. Hence, we do not
require the extension variable $t_{11}$ to linearize $\spec$.
This equality $\ell_{22} = t_{11}$ is also contained as
polynomial $g_8$ in the computed $\ldrl$-Gröbner basis.

All non-linear terms of $\spec$ that cannot be linearized using gate polynomials we
introduce extension variables as described in
Section~\ref{sec:grevlex_verification}.

\ \\
At this point, our encoding consists of a linear specification
polynomial and a set of quadratic gate polynomials and Boolean input polynomials that generate a
Gröbner basis w.r.t.\ a lexicographic term ordering. The following
subsections present how we linearize elements of the Gröbner basis.

\subsection{Preprocessing}\label{ssec:preprocessing}

The goal of preprocessing is to eliminate variables and derive linear polynomials in the
$\llex$-Gröbner basis that can be identified using simple
heuristics.
We employ three steps of rewriting, depicted in
Alg.~\ref{alg:preprocess}.

\begin{algorithm}[tb]\small
  \SetKwInOut{Input}{Input} \SetKwInOut{Output}{Output}
  \SetKw{And}{and}

  \Input{Set of poynomial encodings of gate constraints $G$}
  \Output{Rewritten Set of Polynomial encodings $G$}
  $G \leftarrow $ Merge-Nodes-with-Equal-Inputs($G$)\;
  $G \leftarrow $ Eliminate-Positive-Nodes($G$)\;
  $G \leftarrow $ Propagating-Equivalent-Nodes($G$)\;
  \Return $G$\;
  \caption{Preprocessing}\label{alg:preprocess}
\end{algorithm}

\paragraph{Merge Nodes with Equal Inputs. }
If multiple AIG nodes $\ell_i, \ell_j$ have the same inputs $a, b$,
we can express one gate polynomial using the other. For instance, in
our running Example~\ref{example:gate} the nodes
$\ell_{24} = \ell_{22}\ell_{16}$ and $\ell_{26}=(1-\ell_{22})(1-\ell_{16})$ would be
such a set of AIG nodes.

Every gate polynomial of an AIG node has degree two, and the quadratic
term is the product of the input nodes. Hence, the non-linear term
in those gate polynomials that have the same inputs is the same.
We remove the non-linear term of the topologically larger polynomial
by adding or subtracting the smaller polynomial.  For instance, we
derive $\ell_{26} - \ell_{24} + \ell_{22} + \ell_{16} - 1$.

Furthermore, assume two gates $\ell_i$ and $\ell_j$ both have input variables $a, b$.
 If at least one input has a different polarity in
$\ell_i$ and $\ell_j$, we immediately can derive that the product $\ell_i\ell_j$
is equal to zero.  To see this, let  $\ell_i - \bar{a}\bar{b}$, $\ell_j - \hat{a}\hat{b}$
be the corresponding gate polynomials, where $\bar{a}$ and $\hat{a}$
represent the polarity of~$a$.  We have
$\ell_i\ell_j = \bar{a}\bar{b}\hat{a}\hat{b} = 0$, since
$(\bar{a} = 1-\hat{a}) \lor (\bar{b} = 1-\hat{b})$ holds.  Thus, we
can always remove the term $\ell_i\ell_j$ in a possible parent node, for
instance the monomial $\ell_{26}\ell_{24}$ in $\ell_{28}$ in
Example~\ref{example:gate} can be removed.

\paragraph{Eliminate Positive Nodes. }
In this step we eliminate nodes which are only non-negated inputs to
other nodes in the graph. This heuristic was already
considered in~\cite{KaufmannBeameBiereNordstrom-DATE22} to introduce a
possible sharing of nodes. Since this heuristic is only applied on
positive inputs, we can simply replace every occurrence of the node by
the corresponding tail in the gate
polynomial of the parent node.  This will increase the degree of the
parent polynomial, but will not increase the number of terms.
We can check whether parts of the new
tail term of the parent are equal to the tail term of another gate
polynomial. If yes, we can reduce the tail term and include the
leading term. This will decrease the temporal increase of the
polynomial degree and furthermore will impose a node sharing which will
be useful in later Gröbner basis computations.
For instance, consider polynomials $f-da, e-ca, d-cb$. We can derive
  $f-cba = f-eb$.

\paragraph{Propagating Equivalent Nodes. } If at any point in the
rewriting we derive a linear polynomial of the form $\ell_i - \ell_j$ or
$\ell_i + \ell_j -1$ we know that
$\ell_i$ is equal to either~$\ell_j$ or to its negation $1-\ell_j$.
We propagate this information by eliminating the topologically larger node  $\ell_i$ from the
polynomial encoding. We choose to eliminate~$\ell_i$ and not $\ell_j$ in
order to not mess up the reverse topological term ordering for parent
nodes of $\ell_j$.  Propagation of equivalent nodes may not directly lead
to linear gate polynomials, but helps to reduce the overall number of
variables.

\subsection{Linear Reduction}\label{ssec:linearrewr}

After preprocessing we repeatedly rewrite the linearized specification by the polynomial $p$ in the Gröbner basis that has the same leading monomial as the specification (line 5 in Alg.~\ref{alg:specred}). For doing so, we need to linearize $p$. The pseudo-code is listed in Alg.~\ref{alg:linearizeviagb}.
By Theorem~\ref{thm:linGB}, we know that a full $\ldrl$-Gröbner basis of a circuit $C$  must contain a linear polynomial $p_{\lin}$ with the same leading monomial as~$p$. If this condition is not met, then
the circuit does not satisfy the specification, as we cannot further reduce $\spec$.

\begin{algorithm}[t]\small
  \SetKwInOut{Input}{Input} \SetKwInOut{Output}{Output}
  \SetKwComment{Comment}{\qquad \qquad\qquad$\triangleright$}{}
  \SetKw{And}{and}

  \Input{Polynomial $p$, Polynomial system $G$ }
  \Output{Linear polynomial $p_{\lin}$ or $\emptyset$}

  $v \leftarrow \lm(p); d \leftarrow 3$ \;
  \While{$d \leq \dist(v)$}{
    $C_v \leftarrow \{v\} \cup \{\text{Children-up-to-Distance}(v, d)\} \cup
    \{\text{Siblings}(v)\}$\;

    $C_v \leftarrow C_v \cup \{\text{Parents}(C_v)\}$\;
    $G_v \leftarrow  \text{Gate-Polynomials}(C_v, G) \cup \text{Boolean-Input-Polynomials}(C_v)$\;

    $G_{\DRL} \leftarrow$ Compute-$\ldrl$-Gröbner-Basis ($G_v$)\;
    \lIf{$\exists p_{\lin} \in G_{\DRL}$ such that $\deg(p_{\lin}) = 1 \land \lm(p_{\lin}) = v$}{\Return $p_{\lin}$}
    $d \leftarrow d + 1$\;
  }

  \Return $\emptyset$\;

  \caption{Linearize-Single-Polynomial}\label{alg:linearizeviagb}
\end{algorithm}

However, we do not want to compute a full $\ldrl$ Gröbner basis.
Our goal is to make the Gröbner basis just big enough such that it contains a
linear polynomial $p_{\lin}$ with leading term~$v$.
Let $v = \lm(p)$.
We aim to compute a Gröbner basis w.r.t. a $\ldrl$-ordering for a sub-circuit $C_{v,d}$ of $C$.
The sub-circuit $C_{v,d}$ is constructed by including $v$ and all children nodes of $v$ up to a maximum distance $d$.  Initially, we set $d=3$. The motivation for this threshold is that our preprocessing techniques already generates most linear polynomials detectable with $d=2$. On the other hand we do not want to start with a larger value for $d$ to keep the initial Gröbner basis computation as small as possible.
If we encounter a child node that already has a linear polynomial representation, we do not further add its children. This allows us to avoid
unnecessary computations by excluding parts of the circuit that have
already been simplified.
Additionally, we include all smaller
sibling nodes of $v$. Siblings are nodes that share at least one child
with $v$. Moreover, we collect all parent nodes whose children are already
included in the collected set of nodes. This ensures that all relevant
dependencies in the sub-circuit are captured.
This set of nodes represents the part of the circuit on which we
will compute a local $\ldrl$-Gröbner basis.

If this local Gröbner basis does not contain the expected linear
polynomial, it suggests that the sub-circuit $C_{v,d}$ is insufficient to
capture the desired behavior. In such cases, we repeat the process for the sub-circuit $C_{v,d+1}$, where we increase the distance $d$ to add more nodes.
Theoretically it would be very beneficial to cache the calls and reuse the computed Gröbner basis for $C_{v,d}$ for the Gröbner basis computation of $C_{v,d+1}$, however in practice most available Gröbner basis engines cannot exploit that a subset of the inputs is already  Gröbner basis.

We continue with the iterative process of increasing $d$ until either a linear polynomial is found, or, in the worst case we have computed a
full $\ldrl$-Gröbner basis for all gate polynomials that are topologically smaller
than~$v$. If we still did not find a linear polynomial at this point, we know that the circuit is incorrect. This follows from Theorem~\ref{thm:linGB}.

While our approach guarantees the completeness of the verification
process, it comes with a practical limitation: computational
complexity. If the sub-circuit grows too large (i.e., if too many
nodes need to be added to $C_v$), the computation of the  $\ldrl$-Gröbner basis becomes
infeasible in practice.

\section{Experimental Evaluation}\label{sec:experiments}

We evaluate our proposed approach on a set of multiplier benchmarks for different input bit-widths $n$.
For all the circuits we have $\spec = \sum_{i=0}^{2n} 2^is_i - (\sum_{i=0}^n 2^ia_i)\cdot(\sum_{i=0}^n 2^ib_i)$, hence choose $\KK = \QQ$.
Since all the leading coefficients of the gate polynomials are 1, the computation will stay in the ring $\ZZ[X] \subseteq \QQ[X]$~\cite{KaufmannBiereKauers-FMCAD19}.

\subsection{Implementation}\label{sec:implementation}

We implement Alg.~\ref{alg:specred} in our tool~\linGB~\cite{multiling,multiling-gh}, written in C++.
We employ the following features:
\begin{itemize}
\item
\linGB uses the polynomial arithmetic module from \amulettwo~\cite{KaufmannBiere-TACAS21}, which is targeted towards polynomial
arithmetic where the variables represent Boolean values and the
coefficients are integer values. In particular, the arithmetic engine
automatically includes reasoning over the Boolean input polynomials,
by reducing exponents, i.e., it calculates
$x\cdot x = x$ internally.
\item We sort the variables based on their minimum distance to the primary inputs to sort all extension variables next to the primary inputs, which
gave us better practical results than the column-wise variable order from \amulettwo.
\item As a consequence of the row-wise order, we do not apply an incremental column-wise reduction algorithm~\cite{KaufmannBiereKauers-FMSD19}, but
  rewrite the complete specification.
\item
For computing the $\ldrl$-Gröbner basis, we use the \msolve~\cite{berthomieu} library.
  Since \msolve is designed for general purposes, we have to explicitly
  provide the Boolean input polynomials.
\item If the distance of a node to the primary inputs is below six and the linearization of the individual polynomial fails, we switch to non-linear rewriting as a fall-back option. This threshold allows us to capture Booth encoding in our multiplier benchmarks. We empirically noticed that the linearization of Booth encodings requires a rather large $\ldrl$-Gröbner basis and it is computationally cheaper to use non-linear rewriting instead. However, switching to non-linear rewriting leads to a non-linear intermediate reduction result, meaning that we have to use non-linear rewriting also for the remainder of the circuit to maintain completeness.
\item In contrast to \amulettwo, we do not support proof logging in
  \linGB at the moment, as we have not yet instrumented \msolve to produce proofs in the PAC~\cite{KaufmannFleuryBiereKauers-FMSD22} format. This missing implementation is part of future work.
\end{itemize}


\subsection{Setup}

We run our experiments on a  Intel i7-1260P CPU.  The time is listed in rounded seconds (wall-clock time). We set the time limit to \SI{300}{\s} and the memory limit to \SI{10000}{\mega\byte}.
We compare \linGB against the algebraic approaches of  \amulettwo~\cite{KaufmannBiere-TACAS21}, \teluma~\cite{KaufmannBeameBiereNordstrom-DATE22}, and \dynphaseorderopt (\dpoo)~\cite{KonradScholl-FMCAD24}.
The tools of related works~\cite{LiuLiaoHuangZhenYuan-DATE24} and~\cite{LiLiYuFujitaJiangHa-TCAD24} are not publicly available.

\paragraph{Benchmarks. }
We evaluate our approach on integer multiplier circuits.
Multipliers consist of three main
components: partial product generation (PPG), partial product
accumulation (PPA), and a final-stage adder (FSA). Each component has
optimized architectures to reduce space and delay.

Two encodings are frequently used for PPG: simple AND-gate-based generation or
Booth encoding.  In
the former case, every partial product $a_ib_j$ is explicitly computed, hence we do not require extension variables in our approach. For Booth encoding, we require extension variables as the
partial products are internally combined.  During PPA, partial
products are accumulated, with the final two layers summed in the FSA.

In structured circuits, PPG, PPA, and FSA are clearly defined,
benefiting tools like \amulettwo and \teluma that require a clear cut
between PPA and FSA to simplify the FSA. In synthesized
circuits, gates are merged and rewritten to optimize the circuit, which
blurs these component boundaries, and complicates direct verification.
We consider two sets of benchmarks:
\begin{itemize}
\item \emph{aoki-multipliers}~\cite{aokipaper}: This set of benchmarks is generated by
combining different architectures for PPG, PPA, and FSA
\footnote{\tiny PPG: simple (sp), Booth encoding (bp);
PPA: Array                         (ar),
Wallace tree                  (wt),
Balanced delay tree           (bd),
Overturned-stairs tree        (os),
Dadda tree                    (dt),
(4;2) compressor tree         (ct),
(7,3) counter tree            (cn),
Red. binary addition tree     (ba);
FSA:
Ripple-carry                  (rc),
  Carry look-ahead              (cl),
  Ripple-block carry look-ahead (rb),
  Block carry look-ahead        (bc),
  Ladner-Fischer                (lf),
  Kogge-Stone                   (ks),
  Brent-Kung                    (bk),
  Han-Carlson                   (hc),
  Conditional sum               (cn),
  Carry select                  (cs),
  Carry-skip fix size           (csf),
  Carry-skip var. size          (csv)}
, yielding  192 non-synthesized multiplier architectures with an input bit-width 64.
\item \emph{optimized \abc multipliers}~\cite{abc}: We generate multipliers in \abc consisting of a simple PPG, an array PPA and a ripple-carry adder as FSA for bit-widths 32, 64, and 128, and optimize all of them within \abc using five
  different types of standard synthesis scripts:
  \emph{resyn}, \emph{resyn2}, \emph{resyn3},
  \emph{dc2}. We include a \emph{complex} script that
  combines several synthesis techniques\footnote{\tiny\texttt{-c "logic;
      mfs2 -W 20; ps; mfs; st; ps; dc2 -l; ps; resub -l -K 16 -N 3 -w
      100; ps; logic; mfs2 -W 20; ps; mfs; st; ps; iresyn -l; ps;
      resyn; ps; resyn2; ps; resyn3; ps; dc2 -l; ps;"}}.
      We include these 15 optimized benchmarks to demonstrate the robustness of our presented approach. Optimized benchmarks are particularly challenging as they cannot be fully decomposed into their building blocks.

\end{itemize}
All benchmarks model correct multipliers, i.e., the circuits fulfill the specification.

\subsection{Results}

  \begin{table}[tb]
    \scriptsize
    \centering
    \begin{tabular}{r|l|r||r|r|r||r|r|r|r|r|rr|rr|rr}
      \multicolumn{3}{c||}{\abc-benchmarks} & \multicolumn{3}{c||}{Related work}  & \multicolumn{2}{c|}{Preprocess} &\multicolumn{2}{c|}{Time (\SI{}{\s})}  &     \multicolumn{7}{c}{\msolve Calls}   \\
      \hline
      $n$ & Synth &Nodes & \cite{KaufmannBeameBiereNordstrom-DATE22} & \cite{KaufmannBiere-TACAS21} & \cite{KonradScholl-FMCAD24}  & MergedN & PosN & Total & \msolve & \# & \multicolumn{2}{c|}{$d=3$ (\SI{}{\s})} & \multicolumn{2}{c|}{$d=4$ (\SI{}{\s})}& \multicolumn{2}{c}{$d=5$ (\SI{}{\s})}\\
      \hline
      32 &    resyn &   7840 &  0.1 &  TO & 0.2 &   1948 &      1 &   1.7 &   0.2 &      7 &    4  &  0.03 &   2  &  0.03 &   1  &  0.04 \\
      32 &   resyn2 &   7840 &  0.1 &  TO & 0.3 &   1948 &      1 &   1.3 &   0.2 &      6 &    4  &  0.03 &   1  &  0.03 &   1  &  0.04 \\
      32 &   resyn3 &   7840 &  0.1 & 0.01 & 0.3 &   1952 &      0 &   0.8 &   0.0 &      0 &    0  &  &   0  &   &   0  &   \\
      32 &      dc2 &   7840 &  0.1 & 0.01 & 0.2 &   1952 &      0 &   1.0 &  0.0 &      0 &    0  &   &   0  &   &   0  &   \\
      32 &  comp &   7839 &   EE &  TO & 0.2 &   1948 &      0 &   1.5 &   0.2 &      6 &    4  &  0.03 &   1  &  0.03 &   1  &  0.04 \\
      \hline
      64 &    resyn &  32064 &  0.3 &  TO & 1.0 &   7996 &      1 &  11.0 &   0.2 &      7 &    4  &  0.03 &   2  &  0.03 &   1  &  0.04 \\
      64 &   resyn2 &  32064 &  0.2 &  TO & 1.0 &   7996 &      1 &  11.7 &   0.2 &      6 &    4  &  0.03 &   1  &  0.03 &   1  &  0.04 \\
      64 &   resyn3 &  32064 &  0.3 & 0.2 & 1.0 &   8000 &      0 &  11.2 &  0.0 &      0 &    0  &   &   0  &   &   0  &   \\
      64 &      dc2 &  32064 &  0.2 & 0.3 & 1.0 &   8000 &      0 &  11.1 &  0.0 &      0 &    0  &   &   0  &   &   0  &   \\
      64 &  comp &  32063 &   EE &  TO & 1.0 &   7996 &      0 &  12.2 &   0.2 &      6 &    4  &  0.03 &   1  &  0.03 &   1  &  0.04 \\
      \hline
     128 &    resyn & 129664 &  1.2 &  TO & 5.7 &  32380 &      1 & 228.6 &   0.2 &      7 &    4  &  0.03 &   2  &  0.04 &   1  &  0.04 \\
     128 &   resyn2 & 129664 &  1.2 &  TO & 6.4 &  32380 &      1 & 232.7 &   0.2 &      6 &    4  &  0.03 &   1  &  0.03 &   1  &  0.04 \\
     128 &   resyn3 & 129664 &  1.2 &  TO & 7.7 &  32384 &      0 & 228.9 &   0.0 &      0 &    0  &   &   0  &   &   0  &   \\
     128 &      dc2 & 129664 &  1.1 &  TO & 6.6 &  32384 &      0 & 229.4 &   0.0 &      0 &    0  &   &   0  &   &   0  &   \\
     128 &  comp & 129663 &   EE &  TO & 5.8 &  32380 &      0 & 230.4 &   0.2 &      6 &    4  &  0.04 &   1  &  0.04 &   1  &  0.04 \\

    \end{tabular}
    \vspace{1ex}
    \caption{Results on $n$-bit \abc benchmarks. \qquad
      TO $\coloneqq {}>\SI{300}{\s}$;
      EE $\coloneqq$ segfault.}\label{tbl:abc}\vspace{-5ex}
    \end{table}

The results for the optimized \abc multipliers are shown in Table~\ref{tbl:abc}.
The heuristics of \amulettwo and \teluma are not robust for these benchmarks and
produce time outs. \dpoo and our tool \linGB are both able to solve all benchmarks within the
time limit. We provide statistics on \linGB and show how often nodes with equal inputs are merged (``MergedN''), the number
of eliminated positive nodes (``PosN''). We explicitly measure the time that is required by \msolve. This time is included in the total computation time.
Additionally we provide statistics on the \msolve calls and list the total number (``\#''), as well as the number of calls for each depth $d$ and the \emph{average} computation time per depth.
For ``resyn3'' and ``dc2'' everything could be linearized via merging nodes.
This is in high contrast to the aoki benchmarks, see Table~\ref{tbl:aoki} and we
believe this is due to dense structure of the \abc graphs, which involve a higher sharing of nodes leading to more node pairs with equal children.

\begin{figure}[tb]
  \centering
  \begin{minipage}{0.50\textwidth}
      \centering
      \includegraphics[height=4.5cm]{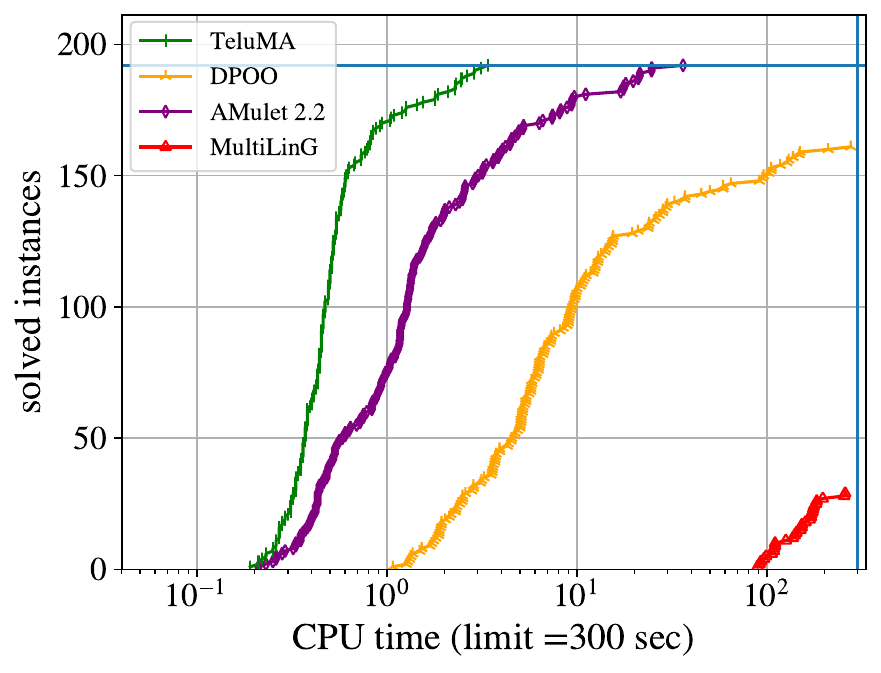}
      \caption{Results of aoki-benchmarks}
      \label{fig:exp_aoki}
  \end{minipage}
  \hfill
  \begin{minipage}{0.45\textwidth}
      \centering
      \includegraphics[height=4.5cm]{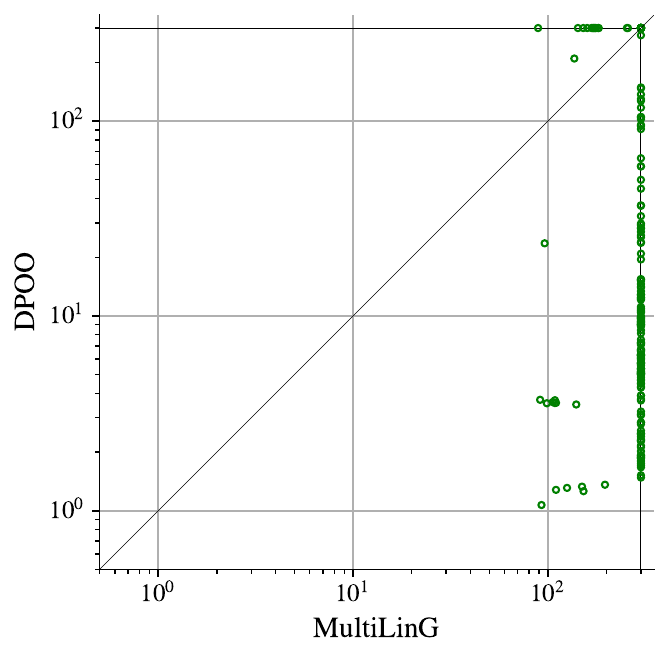}
      \caption{\linGB versus \dpoo}\label{fig:scatter}
  \end{minipage}
  \vspace{-2ex}
\end{figure}

\begin{table}[tb]
  \scriptsize
  \centering
  \begin{tabular}{l|l||r|r||r|r||r|rr|rr|rr|rr||r}
    \multicolumn{2}{c||}{Benchmarks} &  \multicolumn{2}{c||}{Preprocess} & \multicolumn{2}{c||}{Time (\SI{}{\s})}  &     \multicolumn{9}{c||}{\msolve Calls} & Nlin \\
    \hline
    Name & Nodes & Mrg & PosN & Total & \msolve (\%) & \# & \multicolumn{2}{c|}{d=3 (\SI{}{\s})} & \multicolumn{2}{c|}{d=4 (\SI{}{\s})}& \multicolumn{2}{c|}{d=5 (\SI{}{\s})} & \multicolumn{2}{c||}{d=6 (\SI{}{\s})} & \%\\
    \hline
    sparrc & 48000 & 8000 & 3968 & 92.7 & 79.5 (86.0) & 3968 & 3968 & 0.02 & 0 &  & 0 &  & 0 &  & 0.0  \\
    sparcl & 53733 & 8000 & 7749 & 96.2 & 79.4 (82.8) & 3969 & 3969 & 0.02 & 0 &  & 0 &  & 0 &  & 0.0  \\
    spwtrc & 49312 & 8332 & 3964 & 152.3 & 119.2 (78.4) & 4004 & 3990 & 0.03 & 14 & 0.03 & 0 &  & 0 &  & 0.0  \\
    spwtcl & 68747 & 8332 & 16845 & 176.2 & 122.6 (69.8) & 4003 & 3990 & 0.03 & 13 & 0.03 & 0 &  & 0 &  & 0.0  \\
    spbdrc & 49116 & 8281 & 3966 & 110.0 & 87.3 (79.8) & 3968 & 3967 & 0.02 & 1 & 0.02 & 0 &  & 0 &  & 0.0  \\
    spbdcl & 69240 & 8281 & 17305 & 142.5 & 101.5 (71.7) & 3966 & 3966 & 0.03 & 0 &  & 0 &  & 0 &  & 0.0  \\
    sposrc & 49392 & 8350 & 3966 & 125.3 & 97.2 (78.1) & 3968 & 3967 & 0.02 & 1 & 0.02 & 0 &  & 0 &  & 0.0  \\
    sposcl & 71291 & 8350 & 18485 & 151.8 & 105.5 (69.9) & 3966 & 3966 & 0.03 & 0 &  & 0 &  & 0 &  & 0.0  \\
    spdtrc & 48000 & 8000 & 3968 & 149.6 & 116.3 (78.1) & 3968 & 3968 & 0.03 & 0 &  & 0 &  & 0 &  & 0.0  \\
    spdtcl & 71000 & 8000 & 19219 & 174.6 & 115.1 (66.1) & 3968 & 3968 & 0.03 & 0 &  & 0 &  & 0 &  & 0.0  \\
    spctrc & 41248 & 8194 & 208 & 196.2 & 162.9 (83.3) & 7763 & 3996 & 0.02 & 1884 & 0.02 & 1883 & 0.03 & 0 &  & 0.0  \\
    spctcl & 62428 & 8194 & 14249 & 257.4 & 171.9 (66.9) & 7761 & 3995 & 0.02 & 1883 & 0.02 & 1883 & 0.03 & 0 &  & 0.0  \\
    spcnrc & 47236 & 8076 & 3060 & 136.6 & 105.7 (77.6) & 3037 & 3037 & 0.04 & 0 &  & 0 &  & 0 &  & 0.0  \\
    spcncl & 70236 & 8076 & 18311 & 173.3 & 111.3 (64.7) & 3037 & 3037 & 0.04 & 0 &  & 0 &  & 0 &  & 0.0  \\
    bparrc & 38311 & 6794 & 2137 & 98.7 & 38.8 (39.6) & 2112 & 2108 & 0.02 & 2 & 0.02 & 1 & 0.05 & 1 & 0.05 & 38.2  \\
    bpwtrc & 37315 & 6550 & 2134 & 91.2 & 43.6 (47.9) & 2139 & 2125 & 0.02 & 12 & 0.02 & 1 & 0.06 & 1 & 0.06 & 39.3  \\
    bpwtcl & 57556 & 6550 & 15589 & 159.1 & 46.6 (29.5) & 2137 & 2124 & 0.02 & 11 & 0.03 & 1 & 0.05 & 1 & 0.06 & 25.5  \\
    bpbdrc & 37365 & 6561 & 2135 & 110.0 & 49.0 (44.9) & 2110 & 2106 & 0.02 & 2 & 0.03 & 1 & 0.05 & 1 & 0.05 & 39.2  \\
    bpbdcl & 58309 & 6561 & 16058 & 181.0 & 44.7 (24.8) & 2109 & 2106 & 0.02 & 1 & 0.03 & 1 & 0.05 & 1 & 0.06 & 25.1  \\
    bposrc & 37459 & 6584 & 2136 & 105.8 & 41.3 (39.4) & 2111 & 2107 & 0.02 & 2 & 0.03 & 1 & 0.06 & 1 & 0.06 & 39.1  \\
    bposcl & 58759 & 6584 & 16296 & 170.9 & 49.5 (29.1) & 2109 & 2106 & 0.02 & 1 & 0.04 & 1 & 0.06 & 1 & 0.06 & 24.9  \\
    bpdtrc & 36044 & 6237 & 2114 & 108.5 & 47.2 (43.6) & 2087 & 2084 & 0.02 & 1 & 0.03 & 1 & 0.05 & 1 & 0.05 & 40.6  \\
    bpdtcl & 59169 & 6237 & 17489 & 167.5 & 52.4 (31.5) & 2087 & 2084 & 0.03 & 1 & 0.03 & 1 & 0.06 & 1 & 0.06 & 24.7  \\
    bpctrc & 32951 & 6428 & 193 & 139.7 & 101.9 (73.1) & 3991 & 2112 & 0.02 & 939 & 0.02 & 939 & 0.04 & 1 & 0.04 & 44.5  \\
    bpctcl & 54251 & 6428 & 14353 & 254.2 & 115.3 (45.6) & 3991 & 2112 & 0.02 & 939 & 0.03 & 939 & 0.04 & 1 & 0.06 & 27.0  \\
    bpcnrc & 35557 & 6300 & 1616 & 89.0 & 39.1 (44.0) & 1609 & 1606 & 0.02 & 1 & 0.03 & 1 & 0.05 & 1 & 0.05 & 41.2  \\
    bpcncl & 58682 & 6300 & 16991 & 170.1 & 50.3 (29.6) & 1609 & 1606 & 0.03 & 1 & 0.03 & 1 & 0.06 & 1 & 0.06 & 24.9  \\
    bpbarc & 38141 & 6590 & 2330 & 108.6 & 46.8 (43.2) & 2322 & 2314 & 0.02 & 6 & 0.03 & 1 & 0.05 & 1 & 0.05 & 38.7  \\
    bpbacl & 61768 & 6590 & 18080 & 182.2 & 54.9 (30.3) & 2321 & 2313 & 0.02 & 6 & 0.02 & 1 & 0.04 & 1 & 0.04 & 23.9  \\

  \end{tabular}
  \vspace{1ex}
  \caption{Results on \emph{solved} aoki benchmarks.}\label{tbl:aoki}\vspace{-5ex}
  \end{table}

Figure~\ref{fig:exp_aoki} shows the results on the aoki-benchmarks. Both, \teluma and \amulettwo, are able to solve the complete benchmark set.
\dpoo solves 163 out of 192 benchmarks, whereas our approach is only able to solve 29 benchmarks. 106 benchmarks exceed the memory limit and 57 benchmarks exceed the time limit.
Details on the solved instances are given in Table~\ref{tbl:aoki}.

Although the number of solved instances is low for \linGB, we are able to solve 13 benchmarks that \dpoo does not cover, see Figure~\ref{fig:scatter}.
All those instances use a carry-lookahead adder  (cl) as FSA, which includes sequences of OR-gates that lead to a monomial blow-up when rewritten.
\amulettwo and \teluma solve these benchmarks using either a SAT solver or polynomial rewriting to replace the FSA with an equivalent ripple-carry adder.
In our approach these circuits benefit from rewriting positive nodes.

Table~\ref{tbl:aoki} provides additional insights and shows that for multipliers using a simple PPG between 65--86\% of the computation time is spent in \msolve. For multipliers using a Booth encoding (bp) this percentage is lower, as we switch to non-linear rewriting during the reduction. Column ``Nlin'' provides the percentage of ``Nodes'' that are reduced using non-linear rewriting.

\emph{Summarizing, \amulettwo and \teluma are highly efficient on the structured circuits but are not robust on optimized benchmarks.
\dpoo and \linGB are both robust and complement each other on complex multiplier designs.
Hence our proposed approach is a valuable addition to the algebraic verification landscape and will be even more powerful when combined with existing methods.}

\section{Conclusion}\label{sec:conclusion}

In this paper we have presented a novel technique to verify directed
acyclic graphs using computer algebra.  Our first contribution is a
theoretical theorem that shows how we can perform the ideal membership
test of a specification polynomial using only linear polynomial
operations.  Secondly, we discuss how we can apply this theorem in
practice to overcome the overhead of computing a full
Gröbner basis. We present a technique that incrementally computes
Gröbner bases for small sub-graphs to extract the linear information
of the polynomials. We have demonstrated the potential of our approach
on a set of multiplier circuits that have been challenging to verify
so far.

In the future we aim to turn the black-box Gröbner basis approach
into a white-box and explore how we can derive the linear polynomials
without the computation of a
full
Gröbner basis. We also
envision equivalence checking as a potential application, as
this restricts the computation to Boolean polynomials.

\begin{credits}
  \subsubsection{\ackname}
  This research was supported by the Austrian Science Fund (FWF)
  [10.55776/ESP666], the joint ANR-FWF
  ANR-19-CE48-0015 \textsc{ECARP}
  and
  ANR-22-CE91-0007 \textsc{EAGLES}
  projects,
  the ANR-19-CE40-0018 \textsc{De Rerum Natura}
  project,
  and
  grants
  DIMRFSI 2021-02–C21/1131 of the Paris Île-de-France Region and
  FA8665-20-1-7029 of
  the EOARD-AFOSR.
\end{credits}
%
%
%
\bibliographystyle{splncs04} \bibliography{degboundedgb}

\newpage
\appendix
\section{Complete Gröbner basis for two-bit multiplier}%
\label{appendix:deggb-2mult}

\begin{lstlisting}[
  basicstyle=\footnotesize
]
gb[1]=l10-t00               gb[27]=l16*a1-l16
gb[2]=s0-l10                gb[28]=b0^2-b0
gb[3]=l12-t10               gb[29]=t10*b0-t10
gb[4]=l14-t01               gb[30]=t00*b0-t00
gb[5]=l18-l16+t01+t10-1     gb[31]=l16*b0-l16
gb[6]=l20+2*l16-t01-t10     gb[32]=b1^2-b1
gb[7]=s1-l20                gb[33]=t11*b1-t11
gb[8]=l22-t11               gb[34]=t10*b1-t11*b0
gb[9]=l24-l16               gb[35]=t01*b1-t01
gb[10]=l26-l24+l16+t11-1    gb[36]=t00*b1-t01*b0
gb[11]=l28+2*l24-l16-t11    gb[37]=l16*b1-l16
gb[12]=s2-l28               gb[38]=t11^2-t11
gb[13]=s3-l24               gb[39]=t10*t11-t11*b0
gb[14]=a0^2-a0              gb[40]=t01*t11-t11*a0
gb[15]=b0*a0-l10            gb[41]=t00*t11-l16
gb[16]=b1*a0-t01            gb[42]=l16*t11-l24
gb[17]=t01*a0-t01           gb[43]=t10^2-t10
gb[18]=t00*a0-t00           gb[44]=t01*t10-l16
gb[19]=l16*a0-l16           gb[45]=t00*t10-t10*a0
gb[20]=a1^2-a1              gb[46]=l16*t10-l16
gb[21]=b0*a1-t10            gb[47]=t01^2-t01
gb[22]=b1*a1-t11            gb[48]=t00*t01-t01*b0
gb[23]=t11*a1-t11           gb[49]=l16*t01+l16*t10+l20-t01-t10
gb[24]=t10*a1-t10           gb[50]=t00^2-t00
gb[25]=t01*a1-t11*a0        gb[51]=l16*t00-l16
gb[26]=t00*a1-t10*a0        gb[52]=l16^2-l16
\end{lstlisting}

\end{document}